\documentclass[12pt]{article}
\usepackage[cp1251]{inputenc}
\usepackage{amssymb,amsmath}
\usepackage[dvips]{graphicx,color}
\usepackage{mathrsfs}
\usepackage{array, amsfonts, mathrsfs}
\usepackage{amssymb}
\usepackage{amsthm}
\usepackage{graphics, graphicx}
\usepackage{epstopdf}
\usepackage[colorlinks=true]{hyperref}
\usepackage{float}
\usepackage{enumitem}

\newtheorem{Lemma}{Lemma}
\newtheorem{Theorem}{Theorem}

\newtheorem{Proposition}{Proposition}

\def\blue{\color{blue}}
\def\red{\color{red}}

\date{}

\begin{document}

\title{\blue\bf N-transform and factorization of the DN-map}
\author{\red Mikhail I. Belishev\thanks{St. Petersburg Department of Steklov Mathematical Institute
of Russian Academy of Sciences, Fontanka 27, St. Petersburg,
Russia, 191023; belishev@pdmi.ras.ru},\,\,\, Dmitrii V.
Korikov\thanks{St. Petersburg Department of Steklov Mathematical
Institute of Russian Academy of Sciences, Fontanka 27, St.
Petersburg, Russia, 191023; thecakeisalie@list.ru}.}
\maketitle

\begin{abstract}
Let $(\Omega,g)$ be a smooth compact 3D Riemannian manifold with
the smooth boundary $\Gamma$, $\tau(x):={\rm dist\,}(x,\Gamma)$,
$x\in\Omega$; $\Omega^\tau:=\{x\in\Omega\,|\,\,{\rm
dist\,}(x,\Gamma)<\tau$\}, $\Gamma^\tau:=\{x\in\Omega\,|\,\,{\rm
dist\,}(x,\Gamma)=\tau$\}, $\tau\geqslant 0$. For the sake of
technical simplicity, we deal with $\Omega$ diffeomorphic to a
ball in $\Bbb R^3$.

Let $\mathscr P:=\{\nabla p\,|\,\,p\in H^1(\Omega)\}$ be the space
of the potential vector fields, and let $\mathscr
L_\lambda:=\{\varkappa\nabla\tau\,|\,\,\varkappa\in L_2(\Omega)\}$
be the space of the vector fields parallel to $\nabla\tau$. The
N-transform is a map from $\mathscr P$ to $\mathscr L_\lambda$
defined layer-wise (in accordance with $\Omega=\cup_{\tau\geqslant
0}\Gamma^\tau$) by
$$
Nh\,\big|_{\Gamma^\tau}:=(P^\tau h)\big|_{\Gamma^{\tau-0}},
\qquad\tau>0,
$$
where $P^\tau$ are the projections in $\mathscr P$ onto the
subspaces $\mathscr P^\tau:=\{h\in\mathscr P\,|\,\,{\rm
supp\,}h\subset\overline{\Omega^\tau}\}$. We show that $N$ is a
unitary operator.

Let $p=p^f(x)$ be a solution to the Dirichlet problem: $\Delta_g
p=0$ in $\Omega\setminus\Gamma$, $p=f$ on $\Gamma$. The DN-map
$\Lambda$ is defined by $\Lambda f:=-\langle\nabla
p^f,\nabla\tau\rangle$ on $\Gamma$. We show that the N-transform
provides a certain factorization $\Lambda^{-1}=V^*V$ and discuss
its possible usefulness for determination of $(\Omega,g)$ from
$\Lambda$.

\end{abstract}

\subsubsection*{\blue About the paper}
\smallskip

\noindent$\bullet$\,\,\,The N-transform was introduced in \cite{B
MN transform} in parallel and by analogy with the M-transform,
which plays a key role in the 3D dynamical (time-domain) inverse
problem for the Maxwell system \cite{BGlas_A&A_Maxwell,B IPI
2022}. These transforms correspond to the Helmholtz-Weyl
decomposition of the 3D vector fields into gradients and curls.
However, unlike M-transform, the N-transform has not yet found
application in inverse problems. In our paper, we study its
fundamental properties: isometry and completeness, and discuss
such a possible application in the form of a conjecture related to
a factorization of the elliptic DN-map (the Calderon operator) of
$\Omega$.
\smallskip

\noindent$\bullet$\,\,\,Let $(\Omega,g)$ be a smooth compact 3D
Riemannian manifold with the smooth boundary $\Gamma$,
$\tau(x):={\rm dist\,}(x,\Gamma)$, $x\in\Omega$. We denote
$\Omega^\tau:=\{x\in\Omega\,|\,\,\tau(x)<\tau\}$,
$\Gamma^\tau:=\{x\in\Omega\,|\,\,\tau(x)=\tau\}$,
$\tau\geqslant0$, and represent $\Omega=\cup_{\tau\geqslant
0}\Gamma^\tau$.

Let $\mathscr P:=\{h=\nabla p\,|\,\,p\in H^1(\Omega)\}$ be the
space of the potential fields; let $\mathscr
L_\lambda:=\{v=\varkappa\nabla\tau\,|\,\,\varkappa\in
L_2(\Omega)\}$ be the space of the vector fields longitudinal
w.r.t. $\nabla\tau$. The N-transform is an isometry from $\mathscr
P$ to $\mathscr L_\lambda$. It is constructed layer-wise (in
accordance with $\Omega=\cup_{\tau\geqslant 0}\Gamma^\tau$) via
the breaks $P^\tau h\big|_{\Gamma^{\tau-0}}$ of the projections on
the subspaces $\mathscr P^\tau:=\{h\in\mathscr P\,|\,\,{\rm
supp\,}h\subset\overline{\Omega^\tau}\}$:
$$
Nh\,:=\,(P^\tau h)\big|_{\Gamma^{\tau-0}}\qquad {\rm
on}\,\,\,\Gamma^\tau,\,\,\,\tau>0
$$
(see \cite{B MN transform}). For the sake of technical simplicity,
we deal with $\Omega$ diffeomorphic to a ball in $\Bbb R^3$ and
show that $N$ is a unitary operator.
\smallskip

\noindent$\bullet$\,\,\,Let $\mathscr H:=\{h\in\mathscr
P\,|\,\,h=\nabla p,\,\,\,\Delta
p=0\,\,\,\text{in}\,\,\,\Omega\}$\,\,($\Delta$ the
Beltrami-Laplace operator) be the subspace of the harmonic
potential fields. We show that the fields $N\mathscr
H\subset\mathscr L_\lambda$ are of the form
$v=\varkappa\nabla\tau$, where $\varkappa$ satisfies a first-order
evolutionary differential equation in $\Gamma\times(0,T)$ (in the
semi-geodesic coordinates $(\gamma,\tau)$ with the base on
$\Gamma$) and the Cauchy data
$\varkappa\big|_{\Gamma}=\varkappa_0$. By $V:\varkappa_0\mapsto
\varkappa\big|_{\Gamma\times[0,T]}$ we denote the operator
(propagator) that solves this Cauchy problem.
\smallskip

\noindent$\bullet$\,\,\,The DN-map $\Lambda$ is associated with
the elliptic Dirichlet problem
$$
\begin{cases}
\Delta p=0 & {\rm in}\,\,\,\Omega\setminus\Gamma;\\
p=f &{\rm on}\,\,\,\Gamma,
\end{cases}
$$
with a solution $p=p^f(x)$, and is defined by
$$
\Lambda f\,:=\,\partial_\nu p^f =-\langle\nabla
p^f,\nabla\tau\rangle\qquad {\rm on}\,\,\, \Gamma,
$$
where $\nu$ is the outward normal at the boundary.

We show that the N-transform provides a factorization
$\Lambda^{-{1}}\,=\,V^*V$ and discuss its possible usefulness for
the electric impedance tomography problem, which is determination
of $(\Omega,g)$ from $\Lambda$.
\smallskip

\noindent$\bullet$\,\,\,{\red The work was supported by the
Ministry of Science and Higher Education of the Russian Federation
(agreement 075-15-2025-344 dated 29/04/2025 for Saint Petersburg
Leonhard Euler International Mathematical Institute at PDMI RAS)}.

\subsubsection*{\blue Manifold and coordinates}

\noindent$\bullet$\,\,\, Let $(M,g)$ be a
smooth\footnote{Everywhere {\it smooth} means $C^\infty$-smooth.}
{\it three-dimensional} Riemannian manifold. By $\langle
a,b\rangle$ we denote the (point-wise) inner product of vector
fields $a,b$ (sections of $T\Omega$); $\Delta$ is the Beltrami -
Laplace operator on $M$. Let $\Omega\Subset M$ be a ball of radius
$T$ centered at a point $c\in\Omega$, $\Gamma:=\partial\Omega$. We
assume that $T<r^{\rm inj}_c$ (injectivity radius), so that ${\rm
exp}_c$ is smooth in $\Omega$.

Let $\tau(x):={\rm dist\,}(x,\Gamma)$,
$\Omega^\tau:=\{x\in\Omega\,|\,\,\tau(x)<\tau\}$,\,\,
$\Gamma^\tau:=\{x\in\Omega\,|\,\,\tau(x)=\tau\}$,\,\,$0\leqslant\tau\leqslant
T$. The semi-geodesic coordinates (s.g.c.) $x\mapsto
(\gamma,\tau):\,\,\tau=\tau(x),\gamma=\gamma(x)$, where
$\gamma(x)\in\Gamma$ satisfies ${\rm
dist\,}(x,\gamma(x))=\tau(x)$, are defined and regular in
$\dot\Omega:=\Omega\setminus\{c\}$. By
$x(\gamma,\tau)\in\dot\Omega$ we denote the point with the s.g.c.
$(\gamma,\tau)\in\Gamma\times[0,T)$. Such an $\Omega$ is
diffeomorphic to a ball in $\Bbb R^3$.

Let $\gamma^1,\gamma^2$ be the local coordinates on $\Gamma$ near
$\gamma(x)$, and
$\partial_\tau,\partial_{\gamma^1},\partial_{\gamma^2}$ be the
coordinate fields in $\dot\Omega$. The length and volume elements
are
$$
ds^2=d\tau+g_{ij}(\gamma,\tau)d\gamma^i d\gamma^j;\qquad
dx=g^{\frac{1}{2}}(\gamma,\tau)\,d\tau\, d\gamma^1 d\gamma^2,\quad
g:={\rm det\,}\{g_{ij}\},
$$

In $\dot\Omega$, each vector field $h$ is represented as
\begin{equation}\label{Eq h=h lambda+h theta}
h=h_\lambda+h_\theta;\quad h_\lambda:=\langle
h,\partial_\tau\rangle\partial_\tau,\quad h_\theta:=h-h_\lambda
\end{equation}
with the components $h_\lambda$ and $h_\theta$, which are called
longitudinal (parallel to $\partial_\tau$) and transversal parts
of $h$, whereas $\langle h_\lambda(x),h_\theta(x)\rangle=0$,
$x\in\dot\Omega$ holds. Note that
$\nu:=-\partial_\tau\big|_\Gamma$ is the field of the outward
normals at the boundary.

\noindent$\bullet$\,\,\, The vector analysis operations (see
\cite{Sch}) in s.g.c. are
\begin{align}\label{Eq def nabla div in sgc}
\nabla p=(\partial_\tau
p)\,\partial_\tau\,+\,\left(g^{ij}{\partial_{\gamma^j}
p}\right)\,\partial_{\gamma^i}; \,\,{\rm div\,} h=g^{-\frac{1}{2}}\partial_\tau (g^{\frac{1}{2}}h^0)+g^{-\frac{1}{2}}\partial_{\gamma^i} (g^{\frac{1}{2}}h^i)
\end{align}
in $\dot\Omega$, where $\{g^{ij}\}:=\{g_{ij}\}^{-1}$ and
$h=h^0\partial_\tau+h^i\partial_{\gamma^i}$ is a smooth vector
field. We say that
$\nabla_\lambda:=(\partial_\tau\cdot)\,\partial_\tau$ and
$\nabla_\theta:=(g^{ij}\partial_{\gamma^j}\cdot)\,\partial_{\gamma^i}$
are the longitudinal and transversal parts of the gradient. The
operations ${\rm div}_\lambda:=g^{-\frac{1}{2}}\partial_\tau
(g^{\frac{1}{2}}\langle\cdot,\partial_\tau\rangle)$ and ${\rm
div}_\theta:=g^{-\frac{1}{2}}\partial_{\gamma^i} (g^{\frac{1}{2}}\langle\cdot,\partial_{\gamma^i}\rangle$ are the longitudinal
and transversal parts of the divergence; so, one represents
\begin{equation}\label{Eq div=div lambda+ div theta}
\nabla p\,=\,\nabla_\lambda p+\nabla_\theta p,\quad{\rm div\,}
h={\rm div}_\lambda h_\lambda+{\rm div}_\theta h_\theta\qquad {\rm
in}\,\,\,\dot\Omega
\end{equation}
in accordance with (\ref{Eq h=h lambda+h theta}).

\subsubsection*{\blue Spaces, subspaces and projections}

\noindent$\bullet$\,\,\, Let ${\mathscr L}$ be the space of the
square integrable  vector fields,
\begin{align}
\notag &(a,b)_{\mathscr L}:=\int_{\Omega}\langle
a,b\rangle\,dx=\int_{\Gamma\times[0,T)}
\left[a^0b^0+g_{ij}a^ib^j\right]\,g^{\frac{1}{2}}\,\,d\gamma^1d\gamma^2\,
d\tau\,=\\
\label{Eq inner product} &=
\int_0^Td\tau\int_{\Gamma^\tau}d\Gamma^\tau\, \langle
a,b\rangle\big|_{\Gamma^\tau},
\end{align}
where $a=a^0\partial_\tau+a^i\partial_{\gamma^i}$,
$b=b^0\partial_\tau+b^i\partial_{\gamma^i}$, and
$d\Gamma^\tau=g^{\frac{1}{2}}(\gamma,\tau)\,d\gamma^1d\gamma^2=\left[\frac{g(\gamma,\tau)}{g(\gamma,0)}\right]^{\frac{1}{2}}d\Gamma$ is the surface element on $\Gamma^\tau$, $d\Gamma$ is
the surface element on $\Gamma$. The latter equality in (\ref{Eq
inner product}) corresponds to the representations
\begin{equation}\label{Eq mL=oplus int}
\Omega=\bigcup_{0\leqslant\tau\leqslant T}\Gamma^\tau,\qquad
L_2(\Omega)\,=\,\oplus\int_{[0,T]}L_2(\Gamma^\tau)\,d\tau.
\end{equation}
Recall that our $\Omega$ is a Riemannian ball diffeomorphic to a
ball in $\Bbb R^3$. In such a case, the Helmholtz - Weyl
decomposition on the potential and solenoidal fields is
\begin{equation}\label{Eq Helm-Weyl}
{\mathscr L}={\mathscr P}\oplus\mathscr S_0,
\end{equation}
where
\begin{align*}
\notag & {\mathscr P}:=\{\nabla p\,|\,\,p\in H^1(\Omega)\},\quad
\mathscr S_0:=\{h\in{\mathscr L}\,|\,\,{\rm div\,}h=0\,\,{\rm
in}\,\,\Omega,\,\,\langle
h,\nu\rangle=0\,\,{\rm on}\,\,\Gamma\}=\\
& =\overline{\{h\in{\mathscr L}\,|\,\,{\rm div\,}h=0\,\,{\rm
in}\,\,\Omega,\,\,{\rm supp\,}h\subset\Omega\setminus\Gamma\}}.
\end{align*}
The first summand is specified as follows:
\begin{equation}\label{Eq Harm subspace}
{\mathscr P}={\mathscr P}_0\oplus{\mathscr H},
\end{equation}
where ${\mathscr P}_0:=\{\nabla p\,|\,\,p\in
H^1_0(\Omega)\}=\overline{\{\nabla p\,|\,\,{\rm
supp\,}p\subset\Omega\setminus\Gamma}\}$\,\, and ${\mathscr
H}:=\{\nabla p\,|\,\,\Delta p=0\,\,\,{\rm in}\,\,\,\Omega\}$ (see,
e.g., \cite{Sch}). We say ${\mathscr H}$ to be the subspace of
harmonic potential fields.
\smallskip

\noindent$\bullet$\,\,\, One more decomposition, which corresponds
to (\ref{Eq h=h lambda+h theta}), is
\begin{equation*}
{\mathscr L}={\mathscr L}_\lambda\oplus{\mathscr L}_\theta,
\end{equation*}
where ${\mathscr L}_\lambda:=\{h\in{\mathscr L}\,|\,h_\theta=0\}$
and ${\mathscr L}_\theta:=\{h\in{\mathscr L}\,|\,h_\lambda=0\}$
are the subspaces of the longitudinal and transversal fields
respectively.
\smallskip

\noindent$\bullet$\,\,\, Denote
$$
{\mathscr P}^\tau:=\{h\in{\mathscr P}\,|\,\,{\rm
supp\,}h\subset\overline{\Omega^\tau}\}, \quad 0<\tau\leqslant
T;\quad {\mathscr P}^0:=\{0\},\,\,{\mathscr P}^T=\mathscr P,
$$
and ${\mathscr P}^\tau_\bot:={\mathscr P}\ominus{\mathscr
P}^\tau$. For a field $h=\nabla p\in{\mathscr P}^\tau$ we have
$\nabla p=0$ in $\Omega\setminus\Omega^\tau$, so that $p=\rm
const$ outside $\Omega^\tau$ holds, and we put
$p\big|_{\Omega\setminus\Omega^\tau}=0$. By $P^\tau$ and
$P^\tau_\bot$ we denote the projections in ${\mathscr P}$ onto the
first and second summands in the decomposition ${\mathscr
P}={\mathscr P}^\tau\oplus{\mathscr P}^\tau_\bot$.
\begin{Lemma}\label{L decomp mP=mP tau+mP tau bot}
For a smooth $h=\nabla p\in{\mathscr P}$, the representations
$P^\tau h=\nabla p^\tau$ and $P^\tau_\bot h=\nabla p^\tau_\bot$
hold with the potentials satisfying
\begin{equation}\label{Eq p=pt+ptb}
\begin{cases}
\Delta p^\tau={\rm div\,}h & {\rm in}\,\,{\rm int\,}\Omega^\tau;\\
p^\tau=0 & {\rm on}\,\,\Gamma^\tau;\\
\partial_\tau p^\tau=\partial_\tau p & {\rm on}\,\,\Gamma;
\end{cases}\qquad,\qquad
\begin{cases}
\Delta p^\tau_\bot=0 & {\rm in}\,\,{\rm int\,}\Omega^\tau;\\
p^\tau_\bot=p & {\rm on}\,\,\Gamma^\tau;\\
\partial_\tau  p^\tau_\bot=0 & {\rm on}\,\,\Gamma;
\end{cases}
\end{equation}
and $p^\tau\big|_{\Omega\setminus\Omega^\tau}=0$,
$p^\tau_\bot\big|_{\Omega\setminus\Omega^\tau}=p$.
\end{Lemma}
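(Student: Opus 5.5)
Our plan is to build the two potentials directly from the boundary value problems (\ref{Eq p=pt+ptb}) and then check three things: they split $p$ (up to a constant), each gradient lies in the right subspace, and the two gradients are orthogonal in $\mathscr L$. Since the decomposition $\mathscr P=\mathscr P^\tau\oplus\mathscr P^\tau_\bot$ is orthogonal, uniqueness of the orthogonal decomposition then gives $P^\tau h=\nabla p^\tau$ and $P^\tau_\bot h=\nabla p^\tau_\bot$.

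\textbf{Step 1 (the complementary potential).} Let $p^\tau_\bot$ solve the mixed problem in ${\rm int\,}\Omega^\tau$ (the collar between $\Gamma$ and $\Gamma^\tau$): $\Delta p^\tau_\bot=0$, $p^\tau_\bot=p$ on $\Gamma^\tau$, $\partial_\tau p^\tau_\bot=0$ on $\Gamma$. The Dirichlet part is imposed on the whole component $\Gamma^\tau$ of the boundary, so the problem is uniquely solvable in $H^1(\Omega^\tau)$ by Lax--Milgram. We extend it by $p$ on $\Omega\setminus\Omega^\tau$. Because the traces on $\Gamma^\tau$ agree from both sides, the glued function lies in $H^1(\Omega)$, and so $\nabla p^\tau_\bot\in\mathscr P$. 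Set $p^\tau:=p-p^\tau_\bot$. Then $p^\tau=0$ outside $\Omega^\tau$ and on $\Gamma^\tau$, $\Delta p^\tau=\Delta p={\rm div\,}h$ in ${\rm int\,}\Omega^\tau$, and $\partial_\tau p^\tau=\partial_\tau p$ on $\Gamma$. Hence $p^\tau$ is exactly the solution of the first problem in (\ref{Eq p=pt+ptb}), and that problem has a unique solution by the same argument. Moreover ${\rm supp\,}\nabla p^\tau\subset\overline{\Omega^\tau}$, so $\nabla p^\tau\in\mathscr P^\tau$.

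\textbf{Step 2 (orthogonality).} We need $\nabla p^\tau_\bot\perp\mathscr P^\tau$. Take any $\nabla q\in\mathscr P^\tau$, normalized as in the text so that $q=0$ on $\Omega\setminus\Omega^\tau$; in particular $q|_{\Gamma^\tau}=0$ in the trace sense. Green's formula in $\Omega^\tau$ gives
$$
(\nabla p^\tau_\bot,\nabla q)_{\mathscr L}=\int_{\Omega^\tau}\langle\nabla p^\tau_\bot,\nabla q\rangle\,dx=-\int_{\Omega^\tau}q\,\Delta p^\tau_\bot\,dx+\int_{\partial\Omega^\tau}q\,\partial_n p^\tau_\bot .
$$
The volume term vanishes by harmonicity. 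On $\Gamma$ the boundary term vanishes because $\partial_\nu p^\tau_\bot=-\partial_\tau p^\tau_\bot=0$, and on $\Gamma^\tau$ it vanishes because $q=0$ there. Hence $\nabla p^\tau_\bot\in\mathscr P^\tau_\bot$. Together with $\nabla p=\nabla p^\tau+\nabla p^\tau_\bot$ and $\nabla p^\tau\in\mathscr P^\tau$, this finishes the argument.

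\textbf{Main obstacle.} The delicate point is justifying Green's formula across $\Gamma^\tau$. For small $\tau$, $\Gamma^\tau$ is a smooth level surface, diffeomorphic to $\Gamma$ since $\dot\Omega$ carries regular s.g.c. For $\tau=T$ the region is all of $\Omega$ and the claim reduces to $p^T_\bot={\rm const}$, which is trivial. We can avoid any regularity issue by working weakly. The weak formulation of the mixed problem is precisely
$$
\int_{\Omega^\tau}\langle\nabla p^\tau_\bot,\nabla q\rangle\,dx=0\qquad\text{for all }q\in H^1(\Omega^\tau)\text{ with }q|_{\Gamma^\tau}=0,
$$
with no condition imposed on $\Gamma$, so the Neumann condition appears as the natural one. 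We also need that every $\nabla q\in\mathscr P^\tau$ arises from such a $q$. This follows because $\nabla q=0$ on the connected set $\Omega\setminus\overline{\Omega^\tau}$ (a smaller concentric ball), so $q$ is constant there; after normalization $q$ vanishes there and its trace on $\Gamma^\tau$ is zero. With this, orthogonality is immediate from the definition of the weak solution. Smoothness of $h$ is used only to read the equations classically.
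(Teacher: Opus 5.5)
Your proof is correct and follows the same route as the paper. The paper simply asserts that $\nabla p=\nabla p^\tau+\nabla p^\tau_\bot$, $\nabla p^\tau\in\mathscr P^\tau$ and orthogonality hold, all of which come from the two boundary value problems via Green's formula exactly as you derive them; your version is slightly more careful, since you verify $\nabla p^\tau_\bot\perp\mathscr P^\tau$ for the whole subspace (which is what uniqueness of the orthogonal decomposition needs), whereas the paper only records $(\nabla p^\tau,\nabla p^\tau_\bot)_{\mathscr L}=0$, and that alone would not identify the projection.
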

\begin{proof}
As is easy to check, the relations $h=\nabla p=\nabla
p^\tau+\nabla p^\tau_\bot$, $\nabla p^\tau\in {\mathscr P}^\tau$
and $(\nabla p^\tau,\nabla p^\tau_\bot)_{\mathscr L}=0$ hold,
which is equivalent to the statement of the Lemma.
\end{proof}
So, the decomposition $h=P^\tau h+P^\tau_\bot h$ takes the form
$\nabla p=\nabla p^\tau+\nabla p^\tau_\bot$ with the potentials
obeying (\ref{Eq p=pt+ptb}).

\subsubsection*{\blue N-transform}

\noindent$\bullet$\,\,\, Fix $\tau\in (0,T)$ and a smooth
$h=\nabla p\in{\mathscr P}$. The field $P^\tau
h\big|_{\Gamma^\tau}:=P^\tau h\big|_{\Gamma^{\tau-0}}$ is
supported on $\Gamma^\tau$ and is orthogonal to $\Gamma^\tau$.
Indeed, in view of $p=p^\tau_\bot$ on $\Gamma^\tau$, the equality
$\nabla_\theta p=\nabla_\theta\, p^\tau_\bot$ holds on
$\Gamma^\tau$, and for $P^\tau h=h-P^\tau_\bot h$ we have
\begin{align}
\notag & P^\tau h\big|_{\Gamma^{\tau}}=\left(\nabla p-\nabla
p^\tau_\bot\right)\big|_{\Gamma^{\tau}}=\left(\partial_\tau
p\,\partial_\tau+\nabla_\theta
p-\partial_\tau p^\tau_\bot\,\partial_\tau-\nabla_\theta p^\tau_\bot\right)\big|_{\Gamma^{\tau}}=\\
\label{Eq auxil 1}& = \left(\partial_\tau
p\,\partial_\tau-\partial_\tau
 p^\tau_\bot\,\partial_\tau\right)\big|_{\Gamma^{\tau}}=\left(\partial_\tau
p-\partial_\tau
p^\tau_\bot\right)\,\partial_\tau\big|_{\Gamma^\tau}.
\end{align}

Define $N:{\mathscr P}\to{\mathscr L}_\lambda$\, layer-wise, i.e.,
in accordance with (\ref{Eq mL=oplus int}), by
\begin{equation}\label{Eq def N}
Nh\big|_{\Gamma^{\tau}}\,:=\,P^\tau h\big|_{\Gamma^{\tau}},\qquad
0\leqslant\tau\leqslant T
\end{equation}
(recall that $P^0=\Bbb O$ and $P^T=\Bbb I$) on smooth fields $h$.
As is shown in \cite{B MN transform}, the transform $N$ is an
isometry, which is extended by continuity from smooth $h$'s onto
${\mathscr P}$ to a unitary operator from ${\mathscr P}$ onto
${\mathscr L}_\lambda$. In Appendix, we provide a proof of these
properties, which in fact basically repeats the proof in \cite{B
MN transform}.
\smallskip

\noindent$\bullet$\,\,\, For $0<\tau<T$, we define the operator
$\Pi^\tau:L_2(\Gamma^\tau)\to L_2(\Gamma^\tau)$ on ${\rm
Dom\,}\Pi^\tau=H^1(\Gamma^\tau)$ by $\Pi^\tau f:=\partial_\tau
u^f$, where $u^f$ is a solution to
\begin{equation}
\label{additional number}
\begin{cases}
\Delta u=0 & {\rm in}\,\,{\rm int\,}\Omega^\tau;\\
u=f & {\rm on}\,\,\Gamma^\tau;\\
\partial_\tau u=0 & {\rm on}\,\,\Gamma.
\end{cases}
\end{equation}
The following facts are well known.
\begin{Proposition}\label{P Pi tau>O}
Operator $\Pi^\tau$ is a 1-st order PDO with the pricipal symbol
$\sigma^\tau(k)=[g^{ij}\big|_{\Gamma^\tau}k_ik_j]^{\frac{1}{2}}$. The
relations $\Pi^\tau=\Pi^{\tau\,*}\geqslant\Bbb O$, ${\rm
Ker\,}\Pi^\tau=\{\rm const\}$ hold.
\end{Proposition}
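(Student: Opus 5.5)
The plan is to treat the symmetry, positivity and kernel claims by a Green's formula argument, and to obtain the symbol by reducing $\Pi^\tau$ to a Dirichlet-to-Neumann type map on $\Gamma^\tau$.

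\emph{Symmetry, positivity and kernel.} Fix $\tau\in(0,T)$ and smooth $f_1,f_2$ on $\Gamma^\tau$, and let $u_k:=u^{f_k}$ solve (\ref{additional number}). Two facts about the geometry are used.
\begin{itemize}
\item $\Omega^\tau$ is the collar between $\Gamma$ and $\Gamma^\tau$, and it does not contain $c$, so it is a smooth compact manifold with boundary $\Gamma\cup\Gamma^\tau$.
\item On $\Gamma^\tau$ the field $\partial_\tau$ is the outward normal of $\Omega^\tau$, because $\tau$ increases toward $\Gamma^\tau$.
\end{itemize}
Green's formula in ${\rm int\,}\Omega^\tau$ then gives
$$
\int_{\Omega^\tau}\langle\nabla u_1,\nabla u_2\rangle\,dx=\int_{\Gamma^\tau}f_1\,\partial_\tau u_2\,d\Gamma^\tau+\int_{\Gamma}u_1\,\partial_\nu u_2\,d\Gamma .
$$
The last term vanishes by the Neumann condition $\partial_\tau u_2=0$ on $\Gamma$. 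Hence $(\Pi^\tau f_1,f_2)_{L_2(\Gamma^\tau)}=\int_{\Omega^\tau}\langle\nabla u_1,\nabla u_2\rangle\,dx$. This form is symmetric and nonnegative. Moreover $(\Pi^\tau f,f)=0$ forces $\nabla u^f=0$, so $u^f$ is constant and $f$ is constant. Conversely, a constant $f$ gives a constant $u^f$, hence $\Pi^\tau f=0$. This yields $\Pi^\tau\geqslant\Bbb O$ and ${\rm Ker\,}\Pi^\tau=\{\rm const\}$.

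\emph{Solvability and self-adjointness.} The mixed problem (\ref{additional number}) is uniquely solvable: subtract an extension of $f$, then apply Lax--Milgram on $\{v\in H^1(\Omega^\tau)\,|\,\,v|_{\Gamma^\tau}=0\}$, where the Poincar\'e inequality holds since the Dirichlet part $\Gamma^\tau$ is a whole boundary component. Elliptic regularity then shows that $f\in H^1(\Gamma^\tau)$ yields $\partial_\tau u^f\in L_2(\Gamma^\tau)$. Self-adjointness, and not only symmetry, follows from the ellipticity established below: an elliptic first-order symmetric $\Psi$DO on a closed manifold with domain $H^1$ is self-adjoint. Equivalently, $\Pi^\tau$ is the operator of the closed form $\int_{\Omega^\tau}|\nabla u^f|^2$ on $H^{1/2}(\Gamma^\tau)$, and its operator domain is $H^1$ by elliptic regularity.

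\emph{Pseudodifferential structure and symbol.} This step needs the most care. $\Pi^\tau$ is a Dirichlet-to-Neumann map on the boundary component $\Gamma^\tau$, and I would use the standard factorization of the Laplacian in semi-geodesic coordinates near $\Gamma^\tau$:
$$
\Delta=\partial_\tau^2+(\partial_\tau\log g^{1/2})\partial_\tau+\Delta_\theta,
$$
where $\Delta_\theta$ is the Laplacian of the surface metric $g_{ij}(\cdot,\tau)$. Following Lee--Uhlmann, one factors $\Delta=(\partial_\tau+\ldots+A(\tau))(\partial_\tau-A(\tau))$ modulo smoothing, with $A$ a first-order $\Psi$DO of principal symbol $\sqrt{g^{ij}k_ik_j}$. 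The solution of (\ref{additional number}) near $\Gamma^\tau$ is the one decaying as one moves away from $\Gamma^\tau$ (into decreasing $\tau$). Its normal derivative is therefore $\partial_\tau u=A(\tau)u$ modulo smoothing, with the sign fixed by positivity. The other boundary component $\Gamma$ lies at positive distance, so the Neumann condition there contributes only a smoothing operator: the map from $f$ to the Cauchy data at $\Gamma$ and back is smoothing by interior elliptic regularity. This gives the stated principal symbol $\sigma^\tau(k)=[g^{ij}|_{\Gamma^\tau}k_ik_j]^{1/2}$.

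The main obstacle is justifying this last localization rigorously. Since the Proposition is presented as well known, I would cite the standard DN-map symbol calculus and only verify the leading-order WKB computation. Namely, substituting $u\sim e^{i\langle k,\gamma\rangle}e^{\sigma(\tau'-\tau)}$ into $\Delta u=0$ at the level of top-order terms gives $\sigma^2=g^{ij}k_ik_j$, and decay toward the interior selects $\sigma>0$.
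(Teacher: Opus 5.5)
The paper gives no proof of this Proposition: it is introduced with ``The following facts are well known''. Your sketch therefore supplies the standard argument rather than competing with one in the text, and it is correct.

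The Green's-formula part is the same integration by parts over $\partial\Omega^\tau=\Gamma\cup\Gamma^\tau$ that the authors use in the Appendix. There $\partial_\tau$ is the outer normal on $\Gamma^\tau$, and the $\Gamma$-term vanishes by the Neumann condition. Together with connectedness of the shell $\Omega^\tau\cong\Gamma\times[0,\tau)$, this gives symmetry, $\Pi^\tau\geqslant 0$ and ${\rm Ker}\,\Pi^\tau=\{{\rm const}\}$. Your form/ellipticity argument then correctly upgrades symmetry to self-adjointness on $H^1(\Gamma^\tau)$. Your sign choice in the factorization is also right: $\partial_\tau u=A(\tau)u$ for the solution that decays toward decreasing $\tau$.

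The only step you leave heuristic is that the far component $\Gamma$ contributes just a smoothing term. A cut-off makes this precise:
\begin{itemize}
\item Take $\chi=1$ near $\Gamma^\tau$ and $\chi=0$ near $\Gamma$, and let $u_0$ be the local parametrix built from $f$ near $\Gamma^\tau$.
\item The error $\Delta(\chi u_0)$ depends smoothingly on $f$. This is because $\Delta u_0$ is smoothing in $f$, and $u_0$ is smoothing in $f$ on ${\rm supp}\,\nabla\chi$, which stays away from $\Gamma^\tau$.
\item Let $w$ solve $\Delta w=-\Delta(\chi u_0)$, $w|_{\Gamma^\tau}=0$, $\partial_\tau w|_{\Gamma}=0$. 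Then $\partial_\tau w|_{\Gamma^\tau}$ is a smoothing operator applied to $f$.
\end{itemize}

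The paper's own illustrations confirm this picture. For the half-space, the symbol of $\Pi^\tau$ is $|k|\tanh|k|\tau=|k|+O(|k|e^{-2|k|\tau})$. That is exactly $|k|$ modulo a smoothing remainder, produced by the Neumann wall at distance $\tau$. For the ball, the eigenvalues $\frac{l(l+1)}{\rho}\frac{1-\rho^{2l+1}}{l+(l+1)\rho^{2l+1}}$ equal $\frac{l+1}{\rho}$ up to an error exponentially small in $l$. This matches the principal symbol value $\sqrt{l(l+1)}/\rho$ at leading order.
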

In the space $L_2(\Omega)$ (see (\ref{Eq mL=oplus int})) we define
the layer-wise operator $\Pi$ on ${\rm
Dom\,}\Pi=\oplus\int_{[0,T]}H^1(\Gamma^\tau)\,d\tau$ by
$$\
(\Pi\varkappa)\big|_{\Gamma^\tau}\,:=\,\Pi^\tau(\varkappa\big|_{\Gamma^\tau}),\qquad
0<\tau<T.
$$
One can easily verify the following facts.
\begin{Proposition}\label{P Pi>O}
The relations $\Pi=\Pi^{*}\geqslant\Bbb O$, ${\rm
Ker\,}\Pi=\mathscr C$,\, ${\rm Ran\,}\Pi\,\bot\mathscr C$ hold,
where $\mathscr C$ is the class of the square-summable leyer-wise
constant functions.
\end{Proposition}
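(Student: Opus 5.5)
The plan is to reduce everything to Proposition \ref{P Pi tau>O} via the direct-integral structure (\ref{Eq mL=oplus int}). In $L_2(\Omega)=\oplus\int_{[0,T]}L_2(\Gamma^\tau)\,d\tau$, the operator $\Pi$ is the decomposable operator with fibers $\Pi^\tau$, each self-adjoint and nonnegative on ${\rm Dom\,}\Pi^\tau=H^1(\Gamma^\tau)$. I would use the standard fact that a measurable family of self-adjoint operators defines a self-adjoint direct integral. Its domain is the set of $\varkappa$ with $\varkappa|_{\Gamma^\tau}\in{\rm Dom\,}\Pi^\tau$ for a.e. $\tau$ and $\int_0^T\|\Pi^\tau\varkappa|_{\Gamma^\tau}\|^2d\tau<\infty$. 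This domain coincides with the declared ${\rm Dom\,}\Pi=\oplus\int H^1(\Gamma^\tau)\,d\tau$ once the graph norms of $\Pi^\tau$ are uniformly equivalent to the $H^1(\Gamma^\tau)$-norms for $\tau$ in compact subsets of $(0,T)$.

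To make the fibers measurable and uniformly controlled, I would transplant everything to the fixed surface $\Gamma$ via the s.g.c. map $\gamma\mapsto x(\gamma,\tau)$. The density $[g(\gamma,\tau)/g(\gamma,0)]^{1/2}$ makes $L_2(\Gamma^\tau)$ unitarily equivalent to $L_2(\Gamma)$. The family $\Pi^\tau$ then becomes a smoothly parametrized family of first-order elliptic PDOs on $\Gamma$ whose symbol $[g^{ij}(\gamma,\tau)k_ik_j]^{1/2}$ depends smoothly on $\tau$. For the resolvents $(\Pi^\tau+\Bbb I)^{-1}$, measurability (indeed continuity in $\tau$) follows from smooth dependence of the boundary value problem (\ref{additional number}) on the layer.

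Symmetry and nonnegativity are then immediate. For $\varkappa\in{\rm Dom\,}\Pi$ we have
$$
(\Pi\varkappa,\varkappa)_{L_2(\Omega)}=\int_0^T(\Pi^\tau\varkappa|_{\Gamma^\tau},\varkappa|_{\Gamma^\tau})_{L_2(\Gamma^\tau)}\,d\tau\geqslant0.
$$
Alternatively, Green's formula gives $(\Pi^\tau f,f)=\int_{\Omega^\tau}|\nabla u^f|^2dx$. For the kernel, $\Pi\varkappa=0$ forces $\Pi^\tau\varkappa|_{\Gamma^\tau}=0$ for a.e. $\tau$, so $\varkappa|_{\Gamma^\tau}={\rm const}=c(\tau)$ by ${\rm Ker\,}\Pi^\tau=\{\rm const\}$. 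Square summability of $\varkappa$ means $\int|c(\tau)|^2|\Gamma^\tau|d\tau<\infty$, which says exactly that $\varkappa\in\mathscr C$. The converse inclusion is clear. Finally, ${\rm Ran\,}\Pi\perp{\rm Ker\,}\Pi^*={\rm Ker\,}\Pi=\mathscr C$ follows from self-adjointness. One can also see it fiberwise: $\int_{\Gamma^\tau}\Pi^\tau f\,d\Gamma^\tau=(\Pi^\tau f,1)=(f,\Pi^\tau 1)=0$.

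The main obstacle is uniformity near the endpoints. As $\tau\to T$, $\Gamma^\tau$ shrinks to the point $c$ and the s.g.c. degenerate; as $\tau\to0$, the layer $\Omega^\tau$ becomes thin. The constants comparing graph norms with $H^1$-norms may therefore blow up there. I would handle this by not insisting on a uniform norm equivalence. Instead I would define $\Pi$ directly as the direct integral of the self-adjoint fibers, which is self-adjoint with no uniformity needed. I would then interpret ${\rm Dom\,}\Pi$ as the fiberwise domain with the integrability condition above. Since $\{\tau=0\}$ and $\{\tau=T\}$ are null sets, they do not affect the direct integral, and every conclusion above is fiberwise, holding a.e. in $\tau$.
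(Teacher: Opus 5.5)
Your proposal is correct and follows the intended route. The paper gives no written proof (``one can easily verify''), and the implicit argument is exactly your layer-wise reduction of each claim to Proposition~\ref{P Pi tau>O} through the decomposition (\ref{Eq mL=oplus int}). Your decision to take ${\rm Dom\,}\Pi$ as the natural fibrewise domain is also well founded: the graph norms of $\Pi^\tau$ degenerate relative to the $H^1(\Gamma^\tau)$-norms as $\tau\to0$ (in the half-space model the symbol $|k|\tanh|k|\tau$ is of order $\tau|k|^2$ for $|k|\ll1/\tau$), so on the literal domain $\oplus\int_{[0,T]}H^1(\Gamma^\tau)\,d\tau$ one only gets essential self-adjointness, and your direct integral is the closure.
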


Taking into account (\ref{Eq auxil 1}) and (\ref{Eq def N}), we
represent
\begin{equation}\label{Eq repres N via Pi}
Nh=N\nabla p=\left(\partial_\tau p-\Pi p\right)\,\partial_\tau.
\end{equation}

\noindent$\bullet$\,\,\, For a fixed $\tau\in[0,T)$, $\Gamma^\tau$
is a 2-dimensional Riemannian manifold (surface) with the metric
$g^\tau=g\big|_{\Gamma^\tau}$ induced by the metric $g$ in
$\Omega$. The elements of the vector field space ${\mathscr
L}^\tau:=\vec L_2(\Gamma^\tau)$ can be identified with the traces
of transversal fields in $\Omega$: ${\mathscr
L}^\tau=\overline{\{a\big|_{\Gamma^\tau}\,|\,\,a\,\,\,\text{is
smooth},\,\,a\in {\mathscr L}_\theta\}}$. It contains the subspace
${\mathscr P}_{\Gamma^\tau}:=\{\nabla^\tau \phi\,|\,\,\phi\in
H^1(\Gamma^\tau)\}$ of the potential fields, where $\nabla^\tau$
is the intrinsic gradient on $\Gamma^\tau$.

Put
\begin{equation}\label{Eq auxil 0}
\dot L_2(\Gamma^\tau):=\{\phi\in
L_2(\Gamma^\tau)\,|\,\,(\phi,1)=\int_{\Gamma^\tau}\phi\,d\Gamma^\tau=0\},\qquad
0\leqslant\tau<T.
\end{equation}
The vector analysis operations $\nabla^\tau:\dot
L_2(\Gamma^\tau)\to{\mathscr P}_{\Gamma^\tau}$ and ${\rm
div\,}^\tau=-\nabla^{\tau\,*}:{\mathscr P}_{\Gamma^\tau}\to\dot
L_2(\Gamma^\tau)$ are injective.

Introduce the spaces
$$
\dot L_2(\Omega)\,:=\,\oplus\int_{[0,T]}\dot
L_2(\Gamma^\tau)\,d\tau=L_2(\Omega)\ominus\mathscr
C\,=\,\overline{{\rm Ran\,}\Pi}
$$
and $\mathscr Q_\theta\,:=\,\oplus\int_{[0,T]} {\mathscr
P}_{\Gamma^\tau}\,d\tau\subset{\mathscr L}_\theta$. The layer-wise
operations
$$
\nabla_\theta\big|_{\dot
L_2(\Omega)}=\oplus\int_{[0,T]}\nabla^\tau\,d\tau:\dot
L_2(\Omega)\to\mathscr Q_\theta
$$
and
$$
{\rm div\,}_\theta\big|_{\mathscr
Q_\theta}=\oplus\int_{[0,T]}{\rm div\,}^\tau\,d\tau:\mathscr
Q_\theta\to\dot L_2(\Omega)
$$
are injective. Therefore, the operations
$\nabla_\theta^{-1}:\mathscr Q_\theta\to\dot L_2(\Omega)$ and
${\rm div\,}_\theta^{-1}:\dot L_2(\Omega)\to\mathscr Q_\theta$ are
well defined, and the relations
\begin{equation}\label{Eq auxil 2}
\nabla_\theta^{-1}\nabla_\theta={\rm id},\qquad
(\nabla_\theta^{-1})^*=-\,{\rm div\,}_\theta^{-1}
\end{equation}
are valid on $\dot L_2(\Omega)$ and $\mathscr Q_\theta$
respectively.
\smallskip

\noindent$\bullet$\,\,\,The latter relations are used to derive
the following representation.
\begin{Lemma}\label{L repres N^*}
Let $\varkappa$ be smooth in $\dot\Omega$,
$v=\varkappa\,\partial_\tau\in{\mathscr L}_\lambda$. For
$N^*:{\mathscr L}_\lambda\to {\mathscr P}$ the representation
\begin{equation}\label{Eq N^*}
N^*v\,=\,P_{\mathscr P}\left[\varkappa\,\partial_\tau\,+\,{\rm
div\,}_\theta^{-1}\Pi\varkappa\right]
\end{equation}
is valid, where $P_{\mathscr P}$ is the projection onto ${\mathscr
P}$ in (\ref{Eq Helm-Weyl}). The relation
\begin{equation}\label{Eq div N^*}
{\rm div\,} N^*v\,=\,{\rm
div\,}\varkappa\,\partial_\tau+\Pi\varkappa
\end{equation}
holds.
\end{Lemma}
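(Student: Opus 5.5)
Compute $N^*$ by duality from the representation (\ref{Eq repres N via Pi}). Take a smooth $h=\nabla p\in\mathscr P$ and a smooth $v=\varkappa\,\partial_\tau\in\mathscr L_\lambda$. Using the layer-wise inner product (\ref{Eq inner product}), we get
$$
(N\nabla p,v)_{\mathscr L}=\int_\Omega(\partial_\tau p-\Pi p)\,\varkappa\,dx=(\partial_\tau p,\varkappa)_{L_2(\Omega)}-(p,\Pi\varkappa)_{L_2(\Omega)}.
$$
The second equality uses the self-adjointness $\Pi=\Pi^*$ from Proposition \ref{P Pi>O}. The first term equals $(\nabla p,\varkappa\partial_\tau)_{\mathscr L}$, because $\nabla p=\partial_\tau p\,\partial_\tau+\nabla_\theta p$ and $\nabla_\theta p\perp\partial_\tau$ pointwise.

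For the second term, recall that ${\rm Ran\,}\Pi\perp\mathscr C$ (Proposition \ref{P Pi>O}), so $\Pi\varkappa\in\dot L_2(\Omega)$ and $w:={\rm div\,}_\theta^{-1}\Pi\varkappa\in\mathscr Q_\theta$ is well defined. The layer-wise constant part of $p$ is annihilated by $\Pi\varkappa$, so we may replace $p$ by its projection $\dot p$ onto $\dot L_2(\Omega)$; this leaves $\nabla_\theta p$ unchanged. Then, layer by layer,
$$
-(p,\Pi\varkappa)=-(\dot p,{\rm div\,}_\theta w)=(\nabla_\theta \dot p,w)_{\mathscr L}=(\nabla p,w)_{\mathscr L}.
$$
Here the middle step is ${\rm div\,}^\tau=-\nabla^{\tau*}$ on each $\Gamma^\tau$ (equivalently (\ref{Eq auxil 2})). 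The last step holds since $w$ is transversal, so only $\nabla_\theta p$ pairs with it. Altogether $(\nabla p,v)=(\nabla p,\varkappa\partial_\tau+w)_{\mathscr L}$ for all smooth $\nabla p$. By density of smooth fields in $\mathscr P$ and boundedness of $N$, this gives (\ref{Eq N^*}), since the element of $\mathscr P$ representing this functional is the projection $P_{\mathscr P}$ of $\varkappa\partial_\tau+w$.

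For (\ref{Eq div N^*}), write $\varkappa\partial_\tau+w=N^*v+s$ with $s\in\mathscr S_0$ by (\ref{Eq Helm-Weyl}). Then ${\rm div\,}s=0$ in $\Omega$ in the distributional sense, so ${\rm div\,}N^*v={\rm div\,}(\varkappa\partial_\tau)+{\rm div\,}w$. By (\ref{Eq div=div lambda+ div theta}), ${\rm div\,}w={\rm div}_\theta w=\Pi\varkappa$, since $w$ is transversal and ${\rm div\,}_\theta$ inverts ${\rm div\,}_\theta^{-1}$ on $\dot L_2(\Omega)$.

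The delicate point is justification rather than algebra. The s.g.c. degenerate at the center $c$, and smooth $\varkappa$ in $\dot\Omega$ need not make $\varkappa\partial_\tau+w$ square integrable or regular near $c$. Likewise the layer-wise integration by parts must be legitimate, with no boundary terms on the closed surfaces $\Gamma^\tau$. I would first assume $\varkappa$ is supported away from $c$ (and is smooth up to $\Gamma$), so that $\Pi\varkappa$ and $w$ are smooth and compactly supported in $\tau\in[0,T)$. The identities then hold there, and the general case follows by density. The divergence formula should be understood in the distributional sense in $\Omega$.
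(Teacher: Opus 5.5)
Your proposal is correct and follows essentially the same route as the paper. The shared steps are:
\begin{itemize}
\item the duality computation of $(N\nabla p,\varkappa\partial_\tau)$ via (\ref{Eq repres N via Pi});
\item moving $\Pi$ onto $\varkappa$ and replacing $p$ by $\dot p$;
\item the layer-wise integration by parts, where your ${\rm div}^\tau=-\nabla^{\tau*}$ is just the paper's $(\nabla_\theta^{-1})^*=-{\rm div}_\theta^{-1}$ from (\ref{Eq auxil 2});
\item density of smooth $h$ in $\mathscr P$;
\item the divergence identity, obtained from ${\rm div}\,P_{\mathscr S}a=0$ in (\ref{Eq Helm-Weyl}).
\end{itemize}
Your extra density step, from $\varkappa$ supported away from $c$ to the general case, tacitly needs the $L_2$-boundedness of $\varkappa\mapsto{\rm div}_\theta^{-1}\Pi\varkappa$, uniformly in $\tau$ up to the center. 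This is the same boundedness the paper invokes (without proof) for its extension by continuity, so you should state it explicitly.
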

\begin{proof} Let $p$ and $\varkappa$ be smooth, $h=\nabla p\in{\mathscr P}$ and
$v=\varkappa\partial_\tau\in{\mathscr L}_\lambda$. Let $\dot p$ be
the projection in $L_2(\Omega)$ onto $\dot
L_2(\Omega)=\overline{{\rm Ran\,}\Pi}$, so that $\nabla_\theta
p=\nabla_\theta\dot p$ holds. Then we have
\begin{align*}
& (Nh,v)_{\mathscr L}\overset{\text{see}\,\,(\ref{Eq repres N via
Pi})}=(\partial_\tau
p\,\partial_\tau,\varkappa\,\partial_\tau)_{{\mathscr L}}-(\Pi p\,\partial_\tau,\varkappa\,\partial_\tau)_{{\mathscr L}}=\\
& =(\partial_\tau p,\varkappa)_{L_2(\Omega)}-(\Pi
p,\varkappa)_{L_2(\Omega)}=(\partial_\tau
p\,\partial_\tau,\varkappa\partial_\tau)_{{\mathscr
L}_\lambda}-(\dot
p,\Pi\varkappa)_{L_2(\Omega)}=\\
& \overset{(\ref{Eq auxil 2})}=(\partial_\tau
p\,\partial_\tau,\varkappa\partial_\tau)_{{\mathscr
L}_\lambda}-(\nabla_\theta^{-1}\nabla_\theta
p,\Pi\varkappa)_{L_2(\Omega)}\overset{(\ref{Eq auxil
2})}=(\partial_\tau
p\,\partial_\tau,\varkappa\partial_\tau)_{{\mathscr
L}_\lambda}+(\nabla_\theta p,{\rm
div\,}_\theta^{-1}\Pi\varkappa)_{{\mathscr L}_\theta}=\\
& =(\partial_\tau p\,\partial_\tau+\nabla_\theta
p,\varkappa\partial_\tau+{\rm
div\,}_\theta^{-1}\Pi\varkappa)_{{\mathscr L}}=(\nabla
p,\varkappa\partial_\tau+{\rm
div\,}_\theta^{-1}\Pi\varkappa)_{{\mathscr L}}=\\
& =(h,\varkappa\partial_\tau+{\rm
div\,}_\theta^{-1}\Pi\varkappa)_{{\mathscr L}} =(P_{\mathscr P}
h,\varkappa\partial_\tau+{\rm
div\,}_\theta^{-1}\Pi\varkappa)_{{\mathscr L}}=\\
& =( h,P_{\mathscr P}[\varkappa\partial_\tau+{\rm
div\,}_\theta^{-1}\Pi\varkappa])_{{\mathscr L}}=(h,N^*v)_{\mathscr
L}.
\end{align*}
So, by the density of smooth $h$'s in $\mathscr P$, for smooth
$\varkappa$'s the representation (\ref{Eq N^*}) does hold. By the
simply verified boubdedness of the composition ${\rm
div\,}_\theta^{-1}\Pi\varkappa]$ it can be extended to $\mathscr
L_\lambda$ by continuity.

Denoting $a:=\varkappa\partial_\tau+{\rm
div\,}_\theta^{-1}\Pi\varkappa$, we have $a=P_{\mathscr P}
a+P_\mathscr S a$ (see (\ref{Eq Helm-Weyl})) and $P_{\mathscr P}
a=a-P_\mathscr S a$, ${\rm div\,}\,P_\mathscr S a=0$, which
implies
\begin{align*}
& {\rm div\,} N^*v={\rm div\,} P_{\mathscr P} a={\rm div\,} a
-{\rm div\,} P_\mathscr S a={\rm div\,} a={\rm
div\,}\,\varkappa\partial_\tau+{\rm
div\,}_\theta^{-1}\Pi\varkappa=\\
&\overset{(\ref{Eq div=div lambda+ div theta})}={\rm
div\,}\varkappa\partial_\tau+{\rm div\,}_\theta\,{\rm
div\,}_\theta^{-1}\Pi\varkappa={\rm
div\,}\,\varkappa\partial_\tau+\Pi\varkappa.
\end{align*}
\end{proof}

\noindent$\bullet$\,\,\, Let $h=\nabla p$ and ${\rm div\,}\,h=0$
hold, so that $\Delta p=0$ holds, i.e., the potential $p$ is
harmonic in $\Omega$. Let $v=\varkappa\,\partial_\tau=Nh$. Then
${\rm div\,}N^*v={\rm div\,} N^*Nh={\rm div\,} h=\Delta p=0$ holds
and, by (\ref{Eq div N^*}), implies
\begin{equation}\label{Eq varkappa harm}
{\rm div\,}\varkappa\,\partial_\tau+\Pi\varkappa\,=\,0\qquad {\rm
in}\,\,\,{\rm int\,}\Omega.
\end{equation}

\subsection*{\blue Factorization of DN-map}

\noindent$\bullet$\,\,\, Consider the Dirichlet problem
$$
\begin{cases}
\Delta p=0 &{\rm in}\,\,\,{\rm int\,}\Omega;\\
p=f &{\rm on}\,\,\,\Gamma;
\end{cases}
$$
let $p=p^f(x)$ be a solution, ${\rm div\,}\nabla p^f=\Delta p^f=0$
holds in $\Omega$. The operator $W:L_2(\Gamma)\to\mathscr
H\subset{\mathscr P}$ (see (\ref{Eq Harm subspace})), $Wf:=\nabla
p^f$ resolves the problem.

The DN map is $\Lambda:L_2(\Gamma)\to {\rm L}_2(\Gamma)$, ${\rm
Dom\,}\Lambda=H^1(\Gamma)$,
$$
\Lambda f\,:=\,\nu p^f\qquad {\rm on}\,\,\Gamma,
$$
where $\nu=-\partial_\tau\big|_{\Gamma}$ is the outward normal at
the boundary. Since
\begin{align}\label{Eq def Lambda}
\notag & (\nabla p^f,\nabla p^g)_{\mathscr H}=(Wf,Wg)_{\mathscr
H}=\int_\Omega \langle
\nabla p^f,\nabla p^g\rangle\,dx=\\
& =\int_\Gamma \nu p^f\,p^g\,d\Gamma=\int_\Gamma \Lambda
f\,g\,d\Gamma,
\end{align}
we have $\Lambda=\Lambda^*=W^*W\geqslant\Bbb O$, ${\rm
Ran\,}\Lambda=\dot L_2(\Gamma)$ (see (\ref{Eq auxil 0})) and ${\rm
Ker\,}\Lambda=\{\rm const\}$.
\smallskip

\noindent$\bullet$\,\,\, Let $N\nabla
p^f=\varkappa^f\,\partial_\tau\in{\mathscr L}_\lambda$. Then ${\rm
div\,} \varkappa^f\partial_\tau+\Pi\varkappa^f\overset{(\ref{Eq
varkappa harm})}=0$ holds and takes the form $g^{-\frac{1}{2}}\partial_\tau(g^{\frac{1}{2}}\varkappa^f)+\Pi\varkappa^f=0$ in
s.g.c. (see (\ref{Eq def nabla div in sgc}) and (\ref{Eq div=div
lambda+ div theta})). Also, in view of $\partial_\tau
p^\tau_\bot\big|_\Gamma\overset{(\ref{Eq p=pt+ptb})}=0$, we have
$$
N\nabla p^f\big|_\Gamma\overset{(\ref{Eq auxil 1})}=\partial_\tau
p^f\partial_\tau\big|_\Gamma=-\nu
p^f\partial_\tau\big|_\Gamma=-\Lambda
f\,\partial_\tau\big|_\Gamma=(\varkappa^f\,\partial_\tau)\big|_\Gamma,
$$
which implies $\varkappa^f=-\Lambda f$ on $\Gamma$. As a result,
$\varkappa=\varkappa^f(x(\gamma,\tau))$ satisfies
\begin{equation}\label{Eq auxil 3}
\begin{cases}
g^{-\frac{1}{2}}\partial_\tau(g^{\frac{1}{2}}\varkappa)+\Pi\varkappa=0 &{\rm in}\,\,\,\Gamma\times(0,T);\\
\varkappa\big|_{\tau=0}=-\Lambda f;
\end{cases}.
\end{equation}
By $V$ we denote the operator (propagator)
$V:\varkappa\big|_{\Gamma}\mapsto
\varkappa\big|_{\Gamma\times(0,T)}$ that resolves the Cauchy
problem (\ref{Eq auxil 3}) \cite{Tsutsumi,Krainer,Yosida}. Since
${\rm Ran\,}\Lambda=\dot L_2(\Gamma)$ holds, $V$ acts from $\dot
L_2(\Gamma)$ to $N\mathscr H\subset \mathscr L_\lambda$.

\noindent$\bullet$\,\,\, By isometry of the N-transform and
(\ref{Eq def Lambda}) we have
\begin{align*}
\notag & (\Lambda f\,g)_{L_2(\Gamma)}=(\nabla p^f,\nabla
p^g)_{\mathscr H}=(N\nabla p^f,N\nabla
p^g)_{{\mathscr L}_\lambda}=(\varkappa^f\partial_\tau,\varkappa^g\partial_\tau)_{{\mathscr L}_\lambda}=\\
\notag &
=(\varkappa^f,\varkappa^g)_{L_2(\Omega)}=(V\varkappa^f\big|_{\tau=0},V\varkappa^g\big|_{\tau=0})_{L_2(\Omega)}=(V\Lambda
f,V\Lambda g)_{L_2(\Omega)}=\\
& = (\Lambda V^*V\Lambda f,g)_{L_2(\Gamma)},
\end{align*}
which implies $\Lambda=\Lambda V^*V\Lambda$ and leads to
\begin{equation}\label{Eq Lambda factoriz+}
\Lambda^{-1}\,=\,V^*V.
\end{equation}
so that $V:\dot L_2(\Gamma)\to N\mathscr H$ provides a
factorization to $\Lambda^{-1}$.

\subsubsection*{\blue Illustration: upper half-space}

\noindent$\bullet$\,\,\, Let $\Omega=\overline{\Bbb
R^3_+}=\{(x,z)\,|\,\,x=(x^1,x^2)\in\Bbb R^2,\,\,z\geqslant 0\}$,
$\Gamma=\{(x,0)\,|\,\,x\in\Bbb R^2\}$. A harmonic potential
$p=p^f$ satisfies
\begin{equation*}
\begin{cases}
\Delta p=0 & {\rm in}\,\,{\rm int\,}\Omega;\\
p=f & {\rm on}\,\,\Gamma;\\
p\overset{}\to 0 & {\rm as}\,\,|x|+z\to\infty;
\end{cases}.
\end{equation*}
The Fourier transform
$$
\tilde u(k,z)=(Fu(\cdot,z))(k):=(2\pi)^{-\frac{3}{2}}\int_{\Bbb
R^2}e^{-\langle k,x\rangle}u(x,z)\,dx,\qquad
$$
implies
\begin{equation*}
\begin{cases}
\tilde p_{zz}+|k|\,\tilde p=0, & z>0;\\
\tilde p=\tilde f, & z=0;\\
\tilde p\overset{}\to 0, & |k|+z\to\infty;
\end{cases}
\end{equation*}
and provides
$$
\tilde p\,(k,z)\,=\,e^{-|k|z}\tilde f(k),\qquad k\in \Bbb
R^2_*,\,\,z\geqslant 0,
$$
whereas $\tilde\Lambda :=F\Lambda F^*$ \,just multiplies functions
by $|k|$.
\smallskip

\noindent$\bullet$\,\,\, The second problem in (\ref{Eq p=pt+ptb})
takes the form
\begin{equation*}
\begin{cases}
(\tilde p^\tau_\bot)_{zz}+|k|^2(\tilde p^\tau_\bot)=0, & z>0;\\
(\tilde p^\tau_\bot) =\tilde p, & z=\tau;\\
(\tilde p^\tau_\bot)_z=0, & z=0;
\end{cases}
\end{equation*}
and provides
$$
\tilde p^\tau_\bot(k,z)=\frac{\cosh |k|z}{\cosh |k|\tau}\,\tilde
p(k,z),\qquad 0\leqslant z\leqslant \tau.
$$
Hence we have
$$
(\tilde p^\tau_\bot)_{z}(k,\tau)=\left(\Pi^\tau \tilde
p(\cdot.\tau)\right)(k)=(|k|\,\tanh|k|\tau)\,\tilde
p(k,\tau),\qquad \tau\geqslant 0.
$$

\noindent$\bullet$\,\,\, Problem (\ref{Eq auxil 3}) takes the form
\begin{equation}\label{Eq auxil 4}
\begin{cases}
\tilde\varkappa_\tau+(|k|\,\tanh|k|\tau)\tilde\varkappa=0,&\tau>0;\\
\tilde\varkappa=-\widetilde{\Lambda f}=-\tilde\Lambda\tilde
f=-|k|\tilde f, & z=0;
\end{cases}
\end{equation}
the solution is
$$
\tilde\varkappa(k,\tau)= e^{-\int_0^\tau
|k|\,\tanh|k|\eta\,d\eta}\,\tilde\varkappa(k,0)=\frac{\tilde\varkappa(k,0)}{\cosh|k|\tau}=-\,\frac{|k|}{\cosh|k|\tau}\,\tilde
f(k).
$$
Thus, the operator resolving (\ref{Eq auxil 4}) acts from
$L_2(\{k\in \Bbb R^2_*\})$ to $L_2\left(\Bbb R^2_*\times
\{z\geqslant 0\}\right)$ by
\begin{equation*}
(\tilde
V\tilde\varkappa_0)(k,z)\,=\,\frac{\tilde\varkappa_0(k)}{\cosh|k|z},
\qquad k\in\Bbb R^2_*,\,\,z\geqslant 0.
\end{equation*}

As is easy to check, for $\tilde V^*:L_2\left(\Bbb R^2_*\times
\{z\geqslant 0\}\right)\to L_2(\{k\in \Bbb R^2_*\})$ one has
$$
(\tilde V^*
\phi)(k)\,=\,\int_0^\infty\frac{\phi(k,z)}{\cosh|k|z}\,dz,\qquad
k\in\Bbb R^2_*.
$$
As a result, we get
$$
(\tilde V^*\tilde V\tilde
\varkappa_0)(k)\,=\,\int_0^\infty\frac{dz}{{\cosh|k|z}}\frac{\tilde
\varkappa_0(k)}{{\cosh|k|z}}=\tilde\varkappa_0(k)\int_0^\infty\frac{dz}{\cosh^2|k|z}=\frac{\tilde\varkappa_0(k)}{|k|}.
$$
Thus, we have $\tilde V^*\tilde V={\tilde\Lambda}^{-1}$ in the
accordance with (\ref{Eq Lambda factoriz+}).

\subsubsection*{Illustration: unit ball}
Let $\Omega=\mathbb{B}=:\{x\in\mathbb{R}^3 \ | \ |x|\le 1\}$; then
$\Gamma^\tau:=\{x\in\mathbb{R}^3 \ | \ |x|= 1-\tau\}$,
$\Gamma=\Gamma^0=\Bbb S^2$. Introduce the spherical coordinates
$(r,\vartheta,\varphi)$; then $\partial_\tau=-\partial_r$. Let
$u^{\tau}_{lm}$ be a solution to (\ref{additional number}), where
$f(\vartheta,\varphi)=Y_l^m(\vartheta,\varphi)$ is a spherical
harmonic. As is easy to derive,
$$
u^{\tau}_{lm}=\frac{(l+1)r^l+lr^{-(l+1)}}{(l+1)\rho^l+l\rho^{-(l+1)}}Y_l^m(\vartheta,\varphi)
\qquad (\rho:=1-\tau)
$$
holds. Hence, one has
$$
\Pi^\tau Y_l^m=-\partial_r u^{\tau}_{lm}|_{r=\rho}=\frac{l(l+1)}{\rho}\frac{1-\rho^{2l+1}}{l+(l+1)\rho^{2l+1}}Y_l^m(\vartheta,\varphi).
$$
In particular, if $f=Y^{m}_l$ on $\Bbb S^2$, then $p^f=r^l f$,
$\Lambda f=lf$, and one obtains
$$
N\nabla p^f=-l\,\lambda_l(\rho)\,f\,\partial_{\tau}; \qquad
Vf=\lambda_l(\rho)\,f, \qquad f=Y^{m}_l,
$$
where
$$
\lambda_l(\rho):=-\frac{(2l+1)\rho^{l-1}}{l+(l+1)\rho^{2l+1}}.
$$
Note that $N\nabla p^f$ ($f\in H^{1/2}(\Gamma)$) is bounded but
may have a jump discontinuity at the center of the ball.

Denote by $\mathfrak{P}_l$ the projection in $L_2(\Bbb S^2)$ on
the eqigenspace of the Laplace-Beltrami operator $\Delta_{\Bbb
S^2}$ corresponding to the eigenvalue $l(l+1)$. Then the above
formulas imply
$$[Vf](\cdot,\tau)=\sum_{l=1}^{\infty}\lambda_l(1-\tau)\,\mathfrak{P}_lf, \quad V^*\phi=\int_0^1 \sum_{l=1}^{\infty}[\mathfrak{P}_l\phi](\cdot,1-\rho)\lambda_l(\rho)\rho^2 d\rho.$$
At last, we have
$$
V^*Vf=\sum_{l=1}^{\infty}\Big(\int_0^1\lambda^2_l(\rho)\rho^2
d\rho\Big)\mathfrak{P}_lf=\sum_{l=1}^{\infty}l^{-1}\mathfrak{P}_lf=\Lambda^{-1}f
$$
in accordance with (\ref{Eq Lambda factoriz+}).

\subsubsection*{\blue Comments and hopes}

\noindent$\bullet$\,\,\,Note that, in problem (\ref{Eq auxil 3}),
the operator $\Pi$ is a layer-wise PDO, whereas its symbol
determines the metric $g$ in s.g.c. (Proposition \ref{P Pi
tau>O}). Consider the case in which $\Omega=\mathbb{R}^3_+$ is a
half-space endoved with the smooth metric
$$ds^2=(dx^3)^2+\sum_{ij=1,2}g_{ij}(x^1,x^2,x^3)\,dx^idx^j,$$
coinciding with the euclidean one outside a sufficiently large
ball $|x|<N$. Then the results of general theory of parabolic PDO
\cite{Tsutsumi,Krainer} applied to the initial problem (\ref{Eq
auxil 3}) imply that its evolution operator
$V^\tau:\,\varkappa(\cdot,0)\mapsto (V\varkappa)(\cdot,\tau)$ is a
negligible PDO for any $\tau>0$.
\smallskip

\noindent$\bullet$\,\,\,Perhaps the above could suggest an
approach to the 3D electrical impedance tomography problem. If we
characterized the factorization (\ref{Eq Lambda factoriz+}) so
that it was realizable (or at least unique), then we could recover
the metric $g$ in s.g.c. by the scheme $\Lambda\mapsto V\mapsto
\Pi\mapsto\Pi^\tau\mapsto g\big|_{\Gamma^\tau},\,\tau>0$\,\,\,(see
Proposition \ref{P Pi tau>O}).

Reducing the problem to the determination $\Lambda\mapsto V$, we
encounter a canonical situation: to recover an operator $V$ via
its module $|V|:=\sqrt{V^*V}=\Lambda^{-{1\over 2}}$, which is
quite common in inverse problems. In the time-domain IP's, the
property of $V$, due to that the factorization $\Lambda^{-1}=V^*V$
is realizable, is its triangularity, which reflects the
fundamental physical fact: the finiteness of the wave propagation
velocity \cite{BSim}. So, in EIT we need to recognize its relevant
"physical" analog. Perhaps, it is related to the complex
geometrical optics.

\section*{Appendix}
Recall that $\Omega$ is a ball diffeomorphic to a ball in $\Bbb
R^3$. The proof of the following basic property of the N-transform
in fact repeats the analogous proof in \cite{B MN transform}.
\begin{Theorem}
The N-transform is a unitary operator from $\mathscr P$ to
$\mathscr L_\lambda$.
\end{Theorem}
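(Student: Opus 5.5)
The plan is to view $N$ as a ``derivative'' of the resolution of identity $\{P^\tau\}_{0\leqslant\tau\leqslant T}$ in $\mathscr P$, and to obtain isometry from $\|h\|^2=\int_0^T d(P^\tau h,h)$ and completeness from a density argument.

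\emph{Step 1 (monotonicity).} The family $\mathscr P^\tau$ increases in $\tau$, with $\mathscr P^0=\{0\}$ and $\mathscr P^T=\mathscr P$. Hence $P^\tau$ is a nondecreasing projection-valued function, and for $h\in\mathscr P$
$$
\|h\|^2_{\mathscr L}=\int_0^T \frac{d}{d\tau}\,\|P^\tau h\|^2_{\mathscr L}\,d\tau,
$$
provided $\tau\mapsto\|P^\tau h\|^2$ is absolutely continuous. I first prove the identity for smooth $h=\nabla p$. Lemma \ref{L decomp mP=mP tau+mP tau bot} gives $\|P^\tau h\|^2=\int_{\Omega^\tau}|\nabla p^\tau|^2dx$, where $p^\tau$ solves the first problem in (\ref{Eq p=pt+ptb}).

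\emph{Step 2 (the derivative).} I will compute $\frac{d}{d\tau}\|P^\tau h\|^2$ by a Hadamard-type domain variation. Write $\|P^\tau h\|^2=(h,P^\tau h)=\int_{\Omega^\tau}\langle\nabla p,\nabla p^\tau\rangle dx$ and integrate by parts using $p^\tau|_{\Gamma^\tau}=0$ and $\partial_\tau p^\tau=\partial_\tau p$ on $\Gamma$. Differentiating in $\tau$ splits into a volume part and a boundary part. The volume part involves $\dot p^\tau:=\partial_\tau$-variation of $p^\tau$, which is harmonic in $\Omega^\tau$, has zero Neumann data on $\Gamma$, and has Dirichlet data $-\partial_\tau p^\tau|_{\Gamma^{\tau-0}}$ on $\Gamma^\tau$. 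The boundary term on $\Gamma^\tau$ is $\langle\nabla p,\nabla p^\tau\rangle$. Using Green's formula with $p^\tau_\bot=p-p^\tau$ and the fact that $\dot p^\tau$ is an $H^1$-harmonic correction, I expect the result to be
$$
\frac{d}{d\tau}\|P^\tau h\|^2=\int_{\Gamma^\tau}|\partial_\tau p^\tau|^2\big|_{\Gamma^{\tau-0}}\,d\Gamma^\tau
=\int_{\Gamma^\tau}|\partial_\tau p-\Pi^\tau p|^2\,d\Gamma^\tau .
$$
The last equality is (\ref{Eq auxil 1}) together with (\ref{Eq repres N via Pi}). Integrating over $\tau$ and using (\ref{Eq inner product}) then gives $\|Nh\|_{\mathscr L_\lambda}=\|h\|_{\mathscr L}$ on smooth fields. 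An alternative that avoids the shape derivative is to expand $\|P^{\tau+\varepsilon}h\|^2-\|P^\tau h\|^2=\|(P^{\tau+\varepsilon}-P^\tau)h\|^2$ and show that $(P^{\tau+\varepsilon}-P^\tau)h$ concentrates in the layer $\Omega^{\tau+\varepsilon}\setminus\Omega^\tau$ with normal component $\approx\partial_\tau p^\tau|_{\Gamma^\tau}$, plus an $o(\varepsilon)$ remainder. I would keep this version as the fallback. Near the center $c$ (at $\tau=T$) the s.g.c.\ degenerate, but this set has measure zero and is handled by the limit $P^\tau\to\Bbb I$. Isometry then extends by continuity to all of $\mathscr P$.

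\emph{Step 3 (surjectivity).} Since $N$ is isometric, $N\mathscr P$ is closed, so it suffices to show that ${\rm Ker}\,N^*=\{0\}$ on $\mathscr L_\lambda$. For smooth $\varkappa$, Lemma \ref{L repres N^*} gives: if $N^*v=0$ then ${\rm div}\,\varkappa\partial_\tau+\Pi\varkappa=0$, i.e.\ $g^{-\frac12}\partial_\tau(g^{\frac12}\varkappa)+\Pi\varkappa=0$, and moreover $\varkappa\partial_\tau+{\rm div}^{-1}_\theta\Pi\varkappa\in\mathscr S_0$. The solenoidal condition includes $\langle\cdot,\nu\rangle=0$ on $\Gamma$, which forces $\varkappa|_\Gamma=0$. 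The resulting Cauchy problem is parabolic since $\Pi\geqslant\Bbb O$ (Proposition \ref{P Pi>O}), and its energy estimate $\frac{d}{d\tau}\|g^{1/2}\varkappa\|^2\leqslant C\|\cdot\|^2$ via Gronwall gives uniqueness, so $\varkappa=0$.

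The main obstacle is that a general $v\in{\rm Ker}\,N^*$ is not smooth. To handle this, I would instead prove density of $N\mathscr P$ directly. For a layer-localized $\varkappa=\phi(\gamma)\chi_{[\tau_1,\tau_2]}(\tau)$, take $h=(P^{\tau_2}-P^{\tau_1})\nabla q$ with $q$ chosen so that $\partial_\tau p^\tau|_{\Gamma^\tau}$ approximates $\phi$, using the surjectivity of the Dirichlet-to-Neumann data. Such $\varkappa$ span $\mathscr L_\lambda$. The computation of Step 2 is the delicate technical core, and this completeness step is the conceptual one.
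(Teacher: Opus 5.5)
Your isometry step follows the paper's argument: differentiate in $\tau$ over the moving domain $\Omega^\tau$, then apply Green's formula, using that the $\tau$-derivative of the potentials is harmonic in $\Omega^\tau$ together with the boundary conditions in (\ref{Eq p=pt+ptb}). The one difference is that you differentiate $(h,P^\tau h)$ rather than the paper's $(P^\tau h,P^\tau g)$, so your volume term does not vanish. It equals $\int_{\Gamma^\tau}\dot p^\tau\,\partial_\tau p^\tau_\bot\,d\Gamma^\tau=-\int_{\Gamma^\tau}\partial_\tau p^\tau\,\partial_\tau p^\tau_\bot\,d\Gamma^\tau$. Adding the boundary term $\int_{\Gamma^\tau}\partial_\tau p\,\partial_\tau p^\tau\,d\Gamma^\tau$ gives exactly your expected $\int_{\Gamma^\tau}|\partial_\tau p^\tau|^2\,d\Gamma^\tau$, so that formula is correct.

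For surjectivity, your final plan is the constructive dual of the paper's proof that ${\rm Ker}\,N^*=\{0\}$. Both rest on two facts. The first is the locality $NP^\tau=X^\tau N$. The second is that an arbitrary normal trace can be prescribed at one level $\Gamma^\sigma$: choose $q=0$ and $\partial_\tau q=\phi$ on $\Gamma^\sigma$, so that $\Pi^\sigma q=0$ and $N\nabla q|_{\Gamma^\sigma}=\phi\,\partial_\tau$. This is the paper's fact 2; no Dirichlet-to-Neumann surjectivity is involved.

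As written, however, the step does not close. You should state the commutation $N(P^{\tau_2}-P^{\tau_1})\nabla q=(X^{\tau_2}-X^{\tau_1})N\nabla q$, since it is what localizes $Nh$. On each $\Gamma^s$, $\tau_1<s<\tau_2$, this field equals $(\partial_\tau q-\Pi^s q)\,\partial_\tau$. One choice of $q$ matches $\phi$ only at a single level, not across a layer of fixed thickness. There are two ways to fix this; in both, functions of $\gamma$ are cut off near $c$.
\begin{enumerate}
\item Partition $[\tau_1,\tau_2]$ with mesh $\epsilon$, and on the layer starting at $\sigma$ use $q_\sigma=(\tau-\sigma)\phi$. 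Since $N\nabla q_\sigma|_{\Gamma^s}=\bigl(\phi-(s-\sigma)\Pi^s\phi\bigr)\partial_\tau$, the total squared error is $O(\epsilon^2)$.
\item More simply, note that ${\rm Ran}\,N$ is closed and invariant under every $X^s$, hence under multiplication by piecewise continuous functions of $\tau$. Therefore $\chi_{[\tau_1,\tau_2]}\bigl(N\nabla(\tau\phi)-\tau\,N\nabla\phi\bigr)=\chi_{[\tau_1,\tau_2]}\,\phi\,\partial_\tau$ lies in ${\rm Ran}\,N$ exactly.
\end{enumerate}

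Once repaired, your route has a small advantage. It exhibits elements of ${\rm Ran}\,N$ directly, and it never has to evaluate the a.e.-in-$\tau$ identity (\ref{Eq auxill 8}) at the particular level $\sigma$ to which the paper's test field is adapted.

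Your first attempt, via Lemma \ref{L repres N^*} and parabolic uniqueness, is a genuinely different idea. As you note, though, it only handles smooth elements of ${\rm Ker}\,N^*$ and cannot replace this step.
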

\begin{proof}
\noindent$\bullet$\,\,\,Show that the N-transform is an isometry.
Taking smooth $h=\nabla p$ and $g=\nabla q$, we have
\begin{align*}
& \frac{d}{d\tau}\,(P^\tau h,P^\tau g)_{\mathscr
L_\lambda}=\\
&=\frac{d}{d\tau}\,\int_{\Omega^\tau}dx\,\langle P^\tau h,P^\tau
g\rangle=\frac{d}{d\tau}\,\int_0^\tau
ds\int_{\Gamma^s}d\Gamma^s\,\langle h-\nabla p^\tau_\bot,g-\nabla
q^\tau_\bot\rangle\,=\,I^\tau\,+I\!I^\tau,
\end{align*}
where $I^\tau:=\int_{\Gamma^\tau}d\Gamma^\tau\,\langle h-\nabla
p^\tau_\bot,g-\nabla
q^\tau_\bot\rangle=\int_{\Gamma^\tau}d\Gamma^\tau\,\langle
Nh,Ng\rangle$, and
\begin{align*}
& I\!I^\tau:=\int_0^\tau
ds\int_{\Gamma^s}d\Gamma^s\,\left[\langle-\,\nabla\frac{\partial
p^\tau_\bot}{\partial\tau},\,{g-\nabla q^\tau_\bot}\rangle
+\langle h-\nabla p^\tau_\bot,\,-\,\nabla\frac{\partial
q^\tau_\bot}{\partial\tau}\rangle\right]=\\
& =\,\int_0^\tau
ds\int_{\Gamma^s}d\Gamma^s\,\left[\langle-\,\nabla\frac{\partial
p^\tau_\bot}{\partial\tau},\,\nabla(q-q^\tau_\bot)\rangle +\langle
\nabla (p-p^\tau_\bot),\,-\,\nabla\frac{\partial
q^\tau_\bot}{\partial\tau}\rangle\right]=\\
& =\,\,\int_{\Omega^\tau}dx\,\left[\Delta\frac{\partial
p^\tau_\bot}{\partial\tau}\,(q-q^\tau_\bot)
+(p-p^\tau_\bot)\,\Delta\frac{\partial
q^\tau_\bot}{\partial\tau}\right]-\\
& -\,\int_{\Gamma}d\Gamma\,\left[\frac{\partial (\partial_\nu
p^\tau_\bot)}{\partial\tau}\,(q-q^\tau_\bot)
+(p-p^\tau_\bot)\,\frac{\partial
(\partial_\nu q^\tau_\bot)}{\partial\tau}\right]-\\
& -\,\int_{\Gamma^\tau}d\Gamma^\tau\,\left[\frac{\partial
(\partial_\nu p^\tau_\bot)}{\partial\tau}\,(q-q^\tau_\bot)
+(p-p^\tau_\bot)\,\frac{\partial (\partial_\nu
q^\tau_\bot)}{\partial\tau}\right]=:\int_{\Omega^\tau}-\int_{\Gamma^\tau}-\int_{\Gamma},
\end{align*}
where $\nu$ is the outward normal to
$\partial\Omega^\tau=\Gamma\cup\Gamma^\tau$. The first equation of
the second system in (\ref{Eq p=pt+ptb}) easily provides
$\Delta\frac{\partial p^\tau_\bot}{\partial\tau}=\frac{\partial
\Delta p^\tau_\bot}{\partial\tau}=\Delta\frac{\partial
q^\tau_\bot}{\partial\tau}=\frac{\partial \Delta
q^\tau_\bot}{\partial\tau}=0$ in ${\rm int\,}\Omega^\tau$, so that
we have $\int_{\Omega^\tau}=0$. The second equation implies
$\int_{\Gamma^\tau}=0$, the third one leads to $\int_{\Gamma}=0$.
As a result, we get $I\!I^\tau=0$ and
$$
\frac{d}{d\tau}\,(P^\tau h,P^\tau g)_{\mathscr
L_\lambda}=\int_{\Gamma^\tau}d\Gamma^\tau\,\langle
Nh,Ng\rangle,\qquad 0<\tau<T.
$$

Integrating over $\tau$ with regard to $P^0=\Bbb O$ and $P^T=\Bbb
I$, we arrive at
$$
(h,g)_\mathscr
P\,=\,\int_0^Td\tau\int_{\Gamma^\tau}d\Gamma^\tau\,\langle
Nh,Ng\rangle=\int_{\Omega}dx\,\langle
Nh,Ng\rangle=(Nh,Ng)_{\mathscr L_\lambda},
$$
so that $N$ is an isometry on the smooth fields. Hence, its
extension by continuity (denoted by the same $N$) is an isometry
from $\mathscr P$ to $\mathscr L_\lambda$.
\smallskip

\noindent$\bullet$\,\,\,The following simple facts are used later.

\noindent{\bf 1.}\,\,\,Fix $0<\tau<T$. Any $h=\nabla p\in\mathscr
P^\tau$ is longitudinal on $\Gamma^\tau$ in view of
$p\big|_{\Gamma^\tau}={\rm const}=0$. By (\ref{Eq p=pt+ptb}), the
latter implies $p^\tau_\bot=0$ in $\Omega^\tau$ and, by (\ref{Eq
auxil 1}), follows to
\begin{equation}\label{Eq auxill 6}
Nh\,=\,h\qquad{\rm on}\,\,\Gamma^\tau.
\end{equation}

\noindent{\bf 2.}\,\,\,For any smooth $\psi$ given on
$\Gamma^\tau$\,\,($0<\tau<T$), there is a field $h\in\mathscr
P^\tau$ such that $h\big|_{\Gamma^{\tau-0}}=\psi\partial_\tau$
holds. Indeed, choosing in $\Omega^\tau$ a smooth function $u$
provided $u=0$ and $\partial_\tau u =\psi$ on $\Gamma^\tau$, and
putting $h\big|_{\Omega^\tau}=\nabla u$,
$h\big|_{\Omega\setminus\Omega^\tau}=0$, we get a required field.
\smallskip

\noindent$\bullet$\,\,\,Show that ${\rm Ran\,}N=\mathscr
L_\lambda$ holds.

The definition of the N-transform (\ref{Eq def N}) easily implies
\begin{equation*}
NP^\tau\,=\,X^\tau N,\qquad 0<\tau<T,
\end{equation*}
where $X^\tau$ cuts off fields on $\Omega^\tau$:
$$
X^\tau v\,=\,
\begin{cases}
v & {\rm in\,}\,\,\Omega^\tau;\\
0, & {\rm in\,}\,\,\Omega\setminus\Omega^\tau;
\end{cases}\,\,.
$$
As a consequence,  we have $P^\tau N^*=N^*X^\tau$, which leads to
\begin{equation}\label{Eq auxill 7}
X^\tau{\rm Ker\,}N^*\,\subset\,{\rm Ker\,}N^*, \qquad 0<\tau<T.
\end{equation}
By the latter, assuming $v=\varkappa \partial_\tau,\,\,v
\bot\,{\rm Ran\,}N$ or, equivalently, $v\in{\rm Ker\,}N^*$, we
have $X^\tau v\in {\rm Ker\,}N^*$ and
\begin{align*}
0\overset{(\ref{Eq auxill 7})}=(h,N^*X^\tau v)_\mathscr
P=(Nh,X^\tau v)_{\mathscr L_\lambda}=\int_{\Omega^\tau}\langle
Nh,v\rangle\,dx=\int_0^T ds\int_{\Gamma^\tau}d\Gamma^\tau\langle
Nh,v\rangle
\end{align*}
for $h\in\mathscr P$ and $0<\tau<T$. Differentiation provides
\begin{equation}\label{Eq auxill 8}
\int_{\Gamma^\tau}\langle Nh,v\rangle\,d\Gamma^\tau\,=\,0, \qquad
0<\tau<T.
\end{equation}

Fix a $\tau=\sigma$. Choose a smooth function $\psi$ on
$\Gamma^\tau$ and a field $h\in\mathscr P^\sigma$,
$h\big|_{\Gamma^\sigma}=\psi\partial_\tau$ that is possible due to
{\bf 2.}. For such a choice, one has
\begin{equation}\label{Eq auxill 9}
\int_{\Gamma^\sigma}\langle
Nh,v\rangle\,d\Gamma^\sigma\overset{(\ref{Eq auxill
6})}=\int_{\Gamma^\sigma}\langle
h,v\rangle\,d\Gamma^\sigma=\int_{\Gamma^\sigma}\psi\,\varkappa\,d\Gamma^\sigma\overset{(\ref{Eq
auxill 8})}=0.
\end{equation}
Since $\psi$ is arbitrary, (\ref{Eq auxill 9}) yields $\varkappa
=0$ on $\Gamma^\sigma$. Since $\sigma$ is arbitrary, we have
$\varkappa=0$ and hence $v=0$ in $\Omega$. Thus ${\rm
Ker\,}N^*=\mathscr L_\lambda\ominus{\rm Ran\,}N=\{0\}$ holds,
i.e., ${\rm Ran\,}N=\mathscr L_\lambda$ is valid.
\smallskip

\noindent$\bullet$\,\,\,So, the transform $N:\mathscr P\to\mathscr
L_\lambda$ is an isometry acting onto the image space, i.e., is a
unitary operator.
\end{proof}


\begin{thebibliography}{9}

\bibitem{B MN transform}
M.I.Belishev.
\newblock {On a unitary transform in the space $L_2(\Omega; \Bbb R^3)$ associated with the Weyl decomposition.}
\newblock {\em Journal of Mathematical Sciences}, 08/2003, Volume 117(2), pages
3900–3909, (2003). DOI:10.1023/A:1024606522660.

\bibitem{B IPI 2022}
M.I.Belishev.
\newblock {New Notions and Constructions of the Boundary
Control Method.}
\newblock {\em Inverse Problems and Imaging}, Vol. 16, No. 6, December 2022, pp. 1447-1471.
doi:10.3934/ipi.2022040.

\bibitem{BGlas_A&A_Maxwell}
M.I.Belishev, A.K.Glasman.
\newblock {Dynamical inverse problem for the Maxwell system: recovering
the velocity in a regular zone (the BC--method)}.
\newblock {\em St.-Petersburg Math. Journal}, 12(2): 279--316, 2001.

\bibitem{BSim}
M.I.Belishev, S.A.Simonov.
\newblock {Triangular and functional models of operators ad systems}.
\newblock {\em Algebra i Analiz}, 36:5 (2024), 101–127.\quad(in Russian)

\bibitem{Krainer}
T.Krainer, BW.Schulze.
\newblock {On the Inverse of Parabolic Systems of Partial Differential Equations of General Form in an Infinite Space-Time Cylinder}.
\newblock {In: S.Albeverio, M.Demuth, E.Schrohe, BW.Schulze. (eds) Parabolicity, Volterra Calculus, and Conical Singularities.}
\newblock {\em Operator Theory: Advances and Applications}, vol 138., Birkh\"auser, Basel (2002). \url{https://doi.org/10.1007/978-3-0348-8191-3_3}

\bibitem{Sch}
G.Schwarz.
\newblock {Hodge decomposition - a method for solving boundary value
problems}.
\newblock {\em Lecture notes in Math.}, 1607.
\newblock{\em Springer--Verlag, Berlin}, 1995.

\bibitem{Tsutsumi}
C.Tsutsumi.
\newblock {The fundamental solution for a parabolic pseudo-differential
operator and parametrices for degenerate operators}.
\newblock {\em Proc. Japan Acad.}, 51(2), pp. 103-108 (1975). DOI:
10.3792/pja/1195518695.

\bibitem{Yosida}
K.Yosida.
\newblock {Functional Analysis}.
\newblock{\em Springer--Verlag, Berlin, Goettingen, Heidelberg}, 1965.






\end{thebibliography}
\end{document}